\setlist{nosep}
\newtheorem{theorem}{Theorem}
\newtheorem{definition}[theorem]{Definition}
\newtheorem{lemma}[theorem]{Lemma}
\newtheorem{proposition}[theorem]{Proposition}
\newtheorem{corollary}[theorem]{Corollary}
\newcommand{\mbb}{\mathbb}
\newcommand{\mc}{\mathcal}
\newcommand{\msf}{\mathsf}
\newcommand{\mrm}{\mathrm}
\newcommand{\norm}[1]{\left|\left|#1\right|\right|_1}
\newcommand{\bra}[1]{\langle #1 |}
\newcommand{\ket}[1]{| #1 \rangle}
\newcommand{\op}[2]{|#1\rangle \langle #2|}
\newcommand{\ip}[2]{\langle #1|#2\rangle}
\newcommand{\tr}{\mathrm{Tr}}
\newcommand{\pr}{\mathrm{Pr}}
\newcommand{\apxep}{\overset{\epsilon}{\approx}}
\newcommand{\wt}[1]{\widetilde{#1}}
\begin{document}

\title{Dephasing-Covariant Operations Enable Asymptotic Reversibility of Quantum Resources}

\author{Eric Chitambar}

\email{echitamb@siu.edu}

\affiliation{Department of Physics and Astronomy, Southern Illinois University,
Carbondale, Illinois 62901, USA}

\begin{abstract}
We study the power of dephasing-covariant operations in the resource theories of coherence and entanglement.  These are quantum operations whose actions commute with a projective measurement.  In the resource theory of coherence, we find that any two states are asymptotically interconvertible under dephasing-covariant operations.  This provides a rare example of a resource theory in which asymptotic reversibility can be attained without needing the maximal set of resource non-generating operations.  When extended to the resource theory of entanglement, the resultant operations share similarities with LOCC, such as prohibiting the increase of all R\'{e}nyi $\alpha$-entropies of entanglement under pure state transformations.  However, we show these operations are still strong enough to enable asymptotic reversibility between any two maximally correlated mixed states, even in the multipartite setting.

\end{abstract}

\maketitle

A quantum resource theory (QRT) studies the dynamics of physical systems under a restricted class of operations.  The defining components in any QRT are a set of allowable or ``free'' operations $\mc{O}$ and a collection of ``free'' states $\mc{F}$.  To capture the notion of resource, the free operations are required to act invariantly on the set of free states, and any state not belonging to $\mc{F}$ is said to possess resource and called a resource state.  For example, in entanglement theory the set $\mc{O}$ typically consists of local quantum operations and classical communication (LOCC), and the free states $\mc{F}$ are non-entangled or separable states \cite{Plenio-2007a}.  In general, the set $\mc{F}$ does not uniquely determine the free operations, and different QRTs can have the same set of free states.  For instance, in the recently proposed resource theories of quantum coherence, the free states are density matrices that are diagonal in some specified basis, but a number of distinct operational classes have been proposed that all share in the property of preserving the diagonal states \cite{Aberg-2006a, Baumgratz-2014a, Winter-2015a, Chitambar-2016c, Yadin-2016a, deVicente-2017a}.

A fundamental question considered in any QRT is whether one state can be converted to another using the free operations.  If it is possible to transform $\rho\to \sigma$ with the allowed operations of the QRT, then one is well-justified in concluding that $\rho$ contains no smaller amount of resource than $\sigma$.  Unfortunately, most pairs of states will lack such a convertibility relationship within a given QRT, including entanglement and coherence theories.  One alternative is to consider \textit{asymptotic transformations} in which one allows for multiple copies of the source/target states and relaxes the condition of perfect transformation.  

For a QRT $(\mc{F},\mc{O})$ with free states and free operations defined on any finite-dimensional system, we say a state $\rho$ is asymptotically convertible to another state $\sigma$ at rate $R$ 
if for every $\epsilon>0$, there exists a ratio $\frac{m}{n}\apxep R$ and map $\mc{E}\in\mc{O}$ such that $\mc{E}(\rho^{\otimes n})\apxep \sigma^{\otimes m}$, where ``$\apxep$'' indicates an $\epsilon$ approximation with respect to the the trace norm \footnote{Recall the trace norm of an operator $A$ is given by \unexpanded{$\norm{A}=\tr\sqrt{A^\dagger A}$}.  The notation \unexpanded{$A\apxep B$} therefore means \unexpanded{$\norm{A-B}<\epsilon$}, which for numbers $a$ and $b$ the relation \unexpanded{$a\apxep b$} reduces to \unexpanded{$|a-b|<\epsilon$}.}.  The supremum of all such rates will be denoted by $R_{\mc{O}}(\rho\to\sigma)$.  In terms of multi-copy processing under the allowed operations, roughly one copy of $\rho$ can be used to simulate $R_{\mc{O}}(\rho\to\sigma)$ copies of $\sigma$ in any quantum information task.  Thus one can argue that $\rho$ possesses at least a fraction $R_{\mc{O}}(\rho\to\sigma)$ of the resource contained in $\sigma$.  

Two states $\rho$ and $\sigma$ are said to be asymptotically intervconvertible or \textit{asymptotically reversible} if 
\begin{equation}
\label{Eq:RateReciprocal}
R_{\mc{O}}(\rho\to\sigma)\cdot R_{\mc{O}}(\sigma\to\rho)=1.
\end{equation}
A physical interpretation of Eq. \eqref{Eq:RateReciprocal} is that $\rho$ contains precisely a fraction $R_\mc{O}(\rho\to\sigma)$ of the resource contained in $\sigma$ and vice versa.   In any QRT, one can partition the set of resource states into different reversibility classes such that two states belong to the same class if and only if they are asymptotically interconvertible \footnote{Indeed, by definition \unexpanded{$R_{\mc{O}}(\rho\to\sigma)\cdot R_{\mc{O}}(\sigma\to\rho)\geq R_{\mc{O}}(\rho\to\tau)\cdot R_{\mc{O}}(\tau\to\sigma)\cdot R_{\mc{O}}(\sigma\to\tau)\cdot R_{\mc{O}}(\tau\to\rho)$} and \unexpanded{$R_{\mc{O}}(\sigma\to\tau)\cdot R_{\mc{O}}(\tau\to\sigma)\geq R_{\mc{O}}(\sigma\to\rho)\cdot R_{\mc{O}}(\rho\to\tau)\cdot R_{\mc{O}}(\tau\to\rho)\cdot R_{\mc{O}}(\rho\to\sigma)$}.  Hence if \unexpanded{$R_{\mc{O}}(\rho\to\sigma)\cdot R_{\mc{O}}(\sigma\to\rho)=1$} and \unexpanded{$R_{\mc{O}}(\rho\to\tau)\cdot R_{\mc{O}}(\tau\to\rho)=1$} then \unexpanded{$R_{\mc{O}}(\rho\to\tau)\cdot R_{\mc{O}}(\tau\to\rho)=1$}.}  \cite{Horodecki-2003a, Gour-2009a}.  For a family of free operations $\mc{O}$, we let $\mc{R}_{\mc{O}}(\rho)$ denote the asymptotic reversibility class containing $\rho$.  A resource theory is called \textit{fully reversible} if all resource states belong to the same reversibility class.  Recently, Brand\~{a}o and Gour have shown that for free states $\mc{F}$ having sufficient structure, a fully reversible theory always emerges if one allows for operations that can generate an asymptotically vanishing amount of resource \cite{Brandao-2015a}.  However, full reversibility is typically not the case for QRTs whose free operations are strictly smaller than the maximal set, such as LOCC in entanglement theory.  Nevertheless, in this paper we show that full reversibility indeed holds for a natural class of non-maximal operations within the resource theory of quantum coherence.  

\medskip

\textbf{\textit{Asymptotic reversibility of entanglement and maximally correlated states.}}
In bipartite entanglement theory, the unit resource state is the two-qubit maximally entangled state $\ket{\Phi^+}=\sqrt{1/2}(\ket{00}+\ket{11})$.  With respect to LOCC, a state $\rho$ belongs to the same reversibility class as $\Phi^+:=\op{\Phi^+}{\Phi^+}$ if its distillable entanglement, $E^D_{\text{LOCC}}(\rho_{MC}):= R_{\text{LOCC}}(\rho\to \Phi^+)$, equals its entanglement cost, $E^C_{\text{LOCC}}(\rho_{MC}):= 1/R_{\text{LOCC}}(\Phi^+\to\rho)$.  All pure states are asymptotically interconvertible and belong to $\mc{R}_{\text{LOCC}}(\Phi^+)$ \cite{Bennett-1996b}.  Actually this class is even larger and includes all the so-called LOCC-flagged states, which are mixtures of pure states with each pure state having an LOCC distinguishable ``flag'' attached to it \cite{Horodecki-1998b, Vollbrecht-2004a, Cornelio-2011a} (see also \cite{Chitambar-2014d}).  It remains a longstanding open problem to determine whether or not $\mc{R}_{\text{LOCC}}(\Phi^+)$ contains non-LOCC-flagged states, or if other LOCC reversibility classes exist among the set of bipartite entangled states \cite{Horodecki-1998b}.

It is well-known that certain entangled mixed states do not belong to $\mc{R}_{\text{LOCC}}(\Phi^+)$, one example being the class of genuinely mixed ``maximally correlated'' (MC) states.  A $d\otimes d$ state $\rho_{MC}$ is called maximally correlated if there exists local orthornormal bases $\{\ket{a_i}\}_{i=1}^d$ and $\{\ket{b_i}\}_{i=1}^d$ such that
\begin{equation}
\label{Eq:MaxCorrelated}
\rho_{MC}=\sum_{i,j=1}^d c_{ij}\op{a_ib_i}{a_jb_j}.
\end{equation}
Observe that every bipartite pure state is an MCstate, a consequence of the Schmidt decomposition.  We will often refer to any locally orthonormal set of the form $\wt{\msf{B}}=\{\ket{a_ib_i}\}_{i=1}^d$ as a \textit{maximally correlated basis}, noting that it spans just a $d$-dimensional subspace of the full $d^2$-dimensional bipartite state space.

MC states are of particular interest in entanglement theory since their simple structure allows for relatively easier analysis.  For instance, the LOCC distillable entanglement of $\rho_{MC}$ is given by
\begin{equation}
E^D_{\text{LOCC}}(\rho_{MC})=h(\{c_{ii}\})-S(\rho_{MC}),
\end{equation}
where $h(\{x_i\})=-\sum_ix_i\log x_i$ and $S(X)=-tr[X\log X]$
\cite{Rains-1999a, Devetak-2005a}, which coincides with its relative entropy of entanglement.  Furthermore, the entanglement of formation is known to be additive for MC states \cite{Vidal-2002c, Horodecki-2003a}, and the LOCC entanglement cost $E^C_{\text{LOCC}}(\rho_{MC})$ has been shown to be strictly larger than $E^D_{\text{LOCC}}(\rho_{MC})$ for all genuinely mixed $\rho_{MC}$ \cite{Vollbrecht-2004a, Cornelio-2011a}, thereby establishing $\rho_{MC}\not\in\mc{R}_{\text{LOCC}}(\Phi^+)$.  

Given the irreversibility of MC states under LOCC, it is natural to consider how much operational power is needed beyond LOCC to recover reversibility.  Since genuinely mixed MC states share structural similarities with pure states -- in terms of having perfect measurement correlation in a particular basis -- it is reasonable to conjecture that reversibility can be recovered by operations that are not too much ``more powerful'' than LOCC.  We show that this indeed is the case by considering a new class of multipartite quantum operations that emerges from the study of dephasing covariant operations in the QRT of coherence.

\medskip

\textbf{\noindent\textit{Dephasing covariant operations and the QRT of coherence.} }

A basic property of quantum measurement is that any superposition of the observable's eigenstates will ``collapse'' when performing the measurement.  Mathematically, this can be described in terms of a dephasing map, which for an orthonormal basis $\mathsf{B}=\{\ket{b_i}\}_{i=1}^d$, is given by $\Delta_{\msf{B}}(\rho)=\sum_{i=1}^d\op{b_i}{b_i}\rho\op{b_i}{b_i}$.  We will be interested in maps whose action commutes with this measurement process.  In the following definition, we let $\mc{D}(H)$ denote the set of density operators acting on Hilbert space $H$.
\begin{definition}
Let $\msf{B}_i$ be an orthonormal basis for $H_i$.  Then a CP map $\mc{E}:\mc{D}(H_1)\to\mc{D}(H_2)$ is called \textit{dephasing covariant} under bases $(\msf{B}_1,\msf{B}_2)$ if $\mc{E}(\Delta_{\msf{B}_1}(\rho))=\Delta_{\msf{B}_2}(\mc{E}(\rho))$ for all $\rho\in\mc{D}(H_1)$. 
\end{definition}
\noindent When $\msf{B}_1=\msf{B}_2=\msf{B}$, then the condition of dephasing covariance can be compactly expressed in terms of the commutator $[\mc{E},\Delta_\msf{B}]=0$.

In the resource theory of coherence, a specific orthonormal basis $\{\ket{i}\}_{i=1}^d$ for a given state space $H$ is fixed (called the ``incoherent'' basis), and the free states $\mc{F}$ are those diagonal in this basis \cite{Streltsov-2017a}.  When extending to $n$ copies of the system, the free states are simply those diagonal in the tensor product basis $\{\ket{i}\}^{\otimes n}$, which can be relabeled as $\{\ket{i}\}_{i=1}^{d^n}$.  Similar to the two-qubit maximally entangled state $\ket{\Phi^+}$, the standard resource unit in coherence theory is the maximally coherent state $\ket{\varphi^+}=\sqrt{1/2}(\ket{0}+\ket{1})$.  As for the free operations, there are a variety of approaches \cite{Aberg-2006a, Baumgratz-2014a, Winter-2015a, Chitambar-2016c, Yadin-2016a, deVicente-2017a}, each taken in light of different physical considerations.  The resource non-generating operations (also called ``maximal'' incoherent operations (MIO)) consists of all CP maps that act invariantly on the set of states diagonal in the incoherent basis.  It was recently shown that MIO allows for asymptotic reversibility between any two resource states \cite{Zhao-2017a}. 

A strictly smaller class of incoherent operations introduced in Refs. \cite{Chitambar-2016c} and \cite{Marvian-2016a} are the dephasing covariant (incoherent) operations (DIO).  These are dephasing covariant maps with the dephasing occurring in the incoherent basis; i.e $\msf{B}_1,\msf{B}_2\subset\{\ket{i}\}_{i=1}^{\infty}$.  One difference between MIO and DIO can be seen in terms of the R\'{e}nyi $\alpha$-entropies.  For pure states $\ket{\psi}$, the R\'{e}nyi $\alpha$-entropy of the dephased state, $S_\alpha(\Delta(\psi))$, monotonically decreases under DIO for $\alpha\in[0,\infty]$ but only for $\alpha\in[\tfrac{1}{2},\infty]$ under MIO \cite{Chitambar-2016b}.  Nevertheless, we will now show that MIO and DIO are equally powerful for asymptotic transformations.
\begin{theorem}
\label{Thm:DIO-reversible}
The QRT of coherence under DIO is fully reversible.
\end{theorem}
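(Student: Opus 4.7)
My plan is to prove the stronger quantitative statement
\[
R_{\text{DIO}}(\rho\to\sigma)=\frac{C_r(\rho)}{C_r(\sigma)},
\]
where $C_r(\rho):=S(\Delta(\rho))-S(\rho)$ is the relative entropy of coherence. Substituting this into \eqref{Eq:RateReciprocal} immediately yields full reversibility. By chaining through $\varphi^+$, it suffices to establish the two one-sided equalities
\[
R_{\text{DIO}}(\rho\to\varphi^+)=C_r(\rho),\qquad R_{\text{DIO}}(\varphi^+\to\rho)=\frac{1}{C_r(\rho)}.
\]

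The upper bounds are essentially free. Since $\text{DIO}\subseteq\text{MIO}$, and Zhao \emph{et al.} have shown that the MIO rates equal $C_r(\rho)/C_r(\sigma)$, we obtain $R_{\text{DIO}}(\rho\to\varphi^+)\leq C_r(\rho)$ and $R_{\text{DIO}}(\varphi^+\to\rho)\leq 1/C_r(\rho)$. The distillation direction is then handled by a standard typical-projection and coherent-relabelling protocol that extracts $\approx n C_r(\rho)$ copies of $\varphi^+$ from $\rho^{\otimes n}$; I would verify this protocol is actually already dephasing-covariant, because both the typical projector of $\Delta(\rho)^{\otimes n}$ and the relabelling permutation are diagonal in the incoherent basis and hence commute with $\Delta$.

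The heart of the argument, and the step I expect to be most delicate, is the formation direction: producing an $\epsilon$-close copy of $\rho^{\otimes n}$ from $\lceil n C_r(\rho)\rceil$ copies of $\varphi^+$ by a DIO map. My approach is to adapt Zhao \emph{et al.}'s MIO formation protocol so that it is built from a Petz-type recovery of the dephasing channel. The Petz recovery $\mc{R}^{\Delta}_{\rho^{\otimes n}}$ satisfies $\mc{R}^{\Delta}_{\rho^{\otimes n}}(\Delta(\rho^{\otimes n}))=\rho^{\otimes n}$ exactly, and on a typical subspace it can be driven by a maximally coherent input of dimension $\approx 2^{nC_r(\rho)}$ supplemented by an incoherent ancilla that carries the dephased empirical distribution of $\rho$ (freely preparable under DIO). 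The crucial observation is that this recovery map is built from $\Delta$ itself together with conjugations by positive operators diagonal in the incoherent basis, so it commutes with $\Delta$ by construction and thus lies in DIO rather than only in MIO.

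The remaining work is to stitch the pieces together: embed $(\varphi^+)^{\otimes m}$ into a typical subspace by a dephasing-covariant isometry, apply the Petz recovery, and bound the trace-norm error by gentle-measurement and asymptotic-equipartition estimates while checking covariance at each step. The hardest part is ensuring that all the approximations keep the overall channel inside DIO --- this is not automatic, because truncations to typical subspaces can inadvertently break covariance if the truncating projectors are not chosen diagonal in the incoherent basis. Once this is in place, chaining distillation and formation through $\varphi^+$ yields $R_{\text{DIO}}(\rho\to\sigma)=C_r(\rho)/C_r(\sigma)$ and hence full reversibility.
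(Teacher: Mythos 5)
Your high-level strategy (reduce to $R_{\text{DIO}}(\rho\to\varphi^+)=C_r(\rho)$ and $R_{\text{DIO}}(\varphi^+\to\rho)=1/C_r(\rho)$, inheriting the converses from MIO) matches the paper, but the formation step contains a fatal error. Any dephasing-covariant map $\mc{R}$ sends incoherent states to incoherent states: applying $\mc{R}\circ\Delta=\Delta\circ\mc{R}$ to the input $\Delta(\sigma)$ gives $\mc{R}(\Delta(\sigma))=\Delta\bigl(\mc{R}(\Delta(\sigma))\bigr)$. Hence no DIO (indeed no MIO) map can satisfy $\mc{R}(\Delta(\rho^{\otimes n}))=\rho^{\otimes n}$ for coherent $\rho$, so the Petz recovery of the dephasing channel cannot lie in DIO. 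Concretely, $\mc{R}^{\Delta}_{\rho}(\sigma)=\rho^{1/2}\Delta(\rho)^{-1/2}\Delta(\sigma)\Delta(\rho)^{-1/2}\rho^{1/2}$ involves conjugation by $\rho^{1/2}$, which is not diagonal in the incoherent basis; it satisfies $\mc{R}\circ\Delta=\mc{R}$ but not $\Delta\circ\mc{R}=\mc{R}\circ\Delta$. Your description is also internally inconsistent: a map built only from $\Delta$ and conjugations by diagonal operators has incoherent output on every input and so cannot form $\rho^{\otimes n}$. Relatedly, the coherence of the resource $(\varphi^+)^{\otimes m}$ is never actually consumed in your construction --- an incoherent ancilla carrying the empirical distribution cannot substitute for it. The paper's formation protocol instead uses a packing (channel-coding) lemma for the CQ channel $\op{x}{x}\to\op{\zeta_x}{\zeta_x}$ to write $\ket{\psi}^{\otimes n}\approx\sum_{l}\ket{\tilde l}\tfrac{1}{\sqrt{C}}\sum_{c}\ket{c}\ket{\chi_{lc}}$ with $\ip{\chi_{lc}}{\chi_{lc'}}=\delta_{cc'}$; the isometry $\ket{l}\mapsto\tfrac{1}{\sqrt{C}}\sum_c\ket{l}\ket{c}\ket{\chi_{lc}}$ followed by tracing out the $\chi$ register is DIO precisely because of that orthogonality, and it is what converts the $\approx nC_r(\rho)$ bits of input coherence (carried by the superposition over $l$) into $\rho^{\otimes n}$.

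The distillation sketch is also too weak for genuinely mixed $\rho$. Projecting onto the typical subspace of $\Delta(\rho)^{\otimes n}$ and applying a diagonal relabelling permutation leaves a state whose off-diagonal structure is still that of a truncation of $\rho^{\otimes n}$, correlated with the purifying system; it is not close to $(\varphi^+)^{\otimes\lfloor nC_r(\rho)\rfloor}$. Achieving the rate $S(\Delta(\rho))-S(\rho)$ requires decoupling from the environment, which Winter and Yang accomplish with a covering lemma and controlled unitaries $U_{q,m}$ that are \emph{not} diagonal; the protocol is nonetheless DIO because a controlled unitary followed by tracing out its target register commutes with dephasing (the paper's Proposition \ref{prop:DIOcU}). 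So your justification that covariance holds because ``everything is diagonal'' does not apply to the protocol you actually need, even though the rate claim itself is correct.
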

\begin{proof}
The theorem will follow by showing that for any $\rho$, both $R_{DIO}(\rho\to\varphi^+)$ and $1/R_{DIO}(\varphi^+\to\rho)$ are given by the relative entropy of coherence $C_r(\rho)=\min_{\sigma\in\mc{F}} S(\rho||\sigma)=S(\Delta(\rho))-S(\rho)$ \cite{Baumgratz-2014a}.  (\textit{Distillation Protocol}.)  In Ref. \cite{Winter-2015a}, Winter and Yang showed that $R_{IO}(\rho\to\ket{\varphi^+})=C_r(\rho)$ for a class of operations simply referred to as Incoherent Operations (IO).  For an arbitrary state $\rho^A$, consider a purification $\ket{\psi}^{AE}=\sum_{x}\sqrt{p(x)}\ket{x}^A\ket{\zeta_x}^B$, where $\ket{x}$ is the incoherent basis. The key tool in Winter and Yang's protocol is a covering lemma for the CQ channel $\op{x}{x}\to\op{\zeta_x}{\zeta_x}$ on the set of typical sequences $x^n$, which allows them to relabel $x^n\to(q,m,s)$ such that for any $\epsilon>0$
\begin{align}
\label{Eq:Winter-Yang-decomposition}
\ket{\psi}^{\otimes n}\apxep&\sum_{q=1}^Q\ket{\tilde{q}}^{A_1}\sum_{m=1}^M\ket{m}^{A_2}\sum_{s=1}^SU_{q,m}\ket{s}^{A_3}\ket{\zeta_{qm_0s}}^B,
\end{align}
where $\ket{\tilde{q}}$ is a subnormalized vector (i.e. $\ip{\tilde{q}}{\tilde{q}}<1)$, $m_0\in\{1,\cdots,M\}$, and 
$\frac{1}{n}\log M\to C_r(\rho)-\epsilon$.  While $U_{q,m}$ is a unitary acting exclusively on system $A_3$, its action can be expressed as a controlled unitary $W_q=\sum_m\op{m}{m}^{A_2}\otimes U^{A_3}_{q,m}$ acting on systems $A_2A_3$ with system $A_2$ being the control.  To obtain a DIO distillation protocol, we observe the following proposition, which can be easily verified.
\begin{proposition}
\label{prop:DIOcU}
For any controlled unitary $W=\sum_m\op{m}{m}^{A_2}\otimes U_m^{A_3}$, the map $\mc{E}^{A_2A_3\to A_2}(\rho^{A_2A_3})=\tr_{A_3}(W\rho W^\dagger)$ is DIO.
\end{proposition}
\noindent Starting from $\ket{\psi}^{\otimes n}$, the protocol involves first measuring system $A_1$ in the incoherent basis thereby collapsing $\ket{\psi}^{\otimes n}$ into the state $\apxep \frac{1}{\sqrt{MS}}\sum_{m=1}^M\ket{m}^{A_2}\sum_{s=1}^SU_{q,m}\ket{s}^{A_3}\ket{\zeta_{qm_0s}}^B$ for some $q$.  By Proposition \ref{prop:DIOcU}, the resulting state can then be transformed into $\frac{1}{\sqrt{M}}\sum_{m=1}^M\ket{m}$ by a DIO map.  
(\textit{Sketch of Formation Protocol}.)  Similar to Winter and Yang's approach, an asymptotic formation protocol of $\rho^A$ is obtained by packing channel codes for the CQ channel $\op{x}{x}\to\op{\zeta_x}{\zeta_x}$ on the set of typical sequences $x^n$.  As discussed in the appendix, there exists a labeling $x^n\to (l,c)$ such that for any $\epsilon>0$
\begin{align}
\label{Eq:state-approx-full}
\ket{\psi}^{\otimes n}\ket{0}^{B_1}&\overset{O(\epsilon)}{\approx}\sum_{l=1}^{L}\ket{\tilde{l}}^{A_1}\frac{1}{\sqrt{C}}\sum_{c=1}^{C}\ket{c}^{A_2}\ket{\chi_{lc}}^{BB_1},
\end{align}
where $\ket{\tilde{l}}$ is a subnormalized vector, $\ket{\chi_{lc}}^{BB_1}=W^\dagger_{l}\left(\ket{\zeta_{lc}}^B\ket{c}^{B_1}\right)$, and  $\frac{1}{n}\log L\to C_r(\rho)+\epsilon$.  Based on Eq. \eqref{Eq:state-approx-full}, we consider the isometry
\begin{equation}
V^{A_1\to A_1A_2B}=\sum_{l=1}^L\frac{1}{\sqrt{C}}\sum_{c=1}^C\op{l}{l}^{A_1}\otimes \ket{c}^{A_2}\ket{\chi_{lc}}^B
\end{equation}
and the associated channel $\mc{V}^{A_1\to A_1A_2}(\rho^{A_1})=\tr_B(V\rho V^\dagger)$.  From the orthonormality $\ip{\chi_{lc}}{\chi_{lc'}}=\delta_{cc'}$, it is straightforward to verify that $\mc{V}^{A_1\to A_1A_2}$ is DIO, similar to Proposition \ref{prop:DIOcU}.  Since $\ket{\psi}^{\otimes n}$ is a purification of $\rho^{\otimes n}$, Eq. \eqref{Eq:state-approx-full} implies that the action of $\mc{E}^{A_1\to A_1A_2}$ on $\sum_{l=1}^L\ket{\tilde{l}}$ generates an $O(\epsilon)$ approximation of $\rho^{\otimes n}$.  Hence, a DIO formation protocol consists of first converting the maximally coherent state $\frac{1}{\sqrt{L}}\sum_{l=1}^L\ket{l}$ into the weakly coherent state $\sum_{l=1}^L\ket{\tilde{l}}$, which can always be accomplished by a DIO map \cite{Du-2015a, Winter-2015a}, and then applying the channel $\mc{V}^{A_1\to A_1A_2}$. 
\end{proof}


\medskip

\noindent\textbf{\textit{Maximally correlated dephasing covariant maps.}}  Let us now move to the bipartite setting and the resource theory of entanglement.  The basic idea is based on a simple one-to-one association between density matrices in $\mc{D}(H_1)$ and MC states in $\mc{D}(H^{\otimes 2}_1)$,
\begin{equation}
\label{Eq:coherence-MC}
\rho=\sum_{i,j=1}^dc_{ij}\op{i}{j}\;\Leftrightarrow\;\wt{\rho}=\sum_{i,j=1}^d c_{ij}\op{ii}{jj}.
\end{equation}
The coherence of $\rho$, as quantified by different coherence measures, is equivalent to the entanglement of $\wt{\rho}$, as given by analogous entanglement measures \cite{Streltsov-2015a, Chitambar-2016d, Zhu-2017a}.  

Our goal is now to construct an operational analog to Eq. \eqref{Eq:coherence-MC}.  Let $\wt{\msf{B}}_1=\{\ket{a_{i}b_{i}}\}_{i=1}^{d_1}$ and $\wt{\msf{B}}_2=\{\ket{a'_{i}b'_{i}}\}_{i=1}^{d_2}$ be any pair of maximally correlated bases for subspaces in $H_1^{\otimes 2}$ and $H_2^{\otimes 2}$ respectively.  Then for any DIO map $\mc{E}:\mc{D}(H_1)\to \mc{D}(H_2)$, let $\mc{E}_{MC}:\mc{D}(\wt{\msf{B}}_{1})\to\mc{D}(\wt{\msf{B}}_{2})$ be the map defined by the action 
\begin{equation}
\label{Eq:extension}
\mc{E}_{MC}(\op{a_{i}b_{i}}{a_{j}b_{j}})=\sum_{k,l=1}^{d_2}c_{ij,kl}\op{a_{k}'b_{k}'}{a_{l}'b_{l}'}
\end{equation}
where $c_{ij,kl}=\bra{k}\mc{E}(\op{i}{j})\ket{l}$.  By construction, $\mc{E}_{MC}$ is dephasing covariant under $(\wt{\msf{B}}_1,\wt{\msf{B}}_2)$, i.e.
\begin{equation}
\label{Eq:MC-Dephasing} 
\Delta_{\wt{\msf{B}}_2}(\mc{E}_{MC}(\rho))=\mc{E}_{MC}(\Delta_{\wt{\msf{B}}_1}(\rho))
\end{equation}
for all $\rho_{MC}=\sum_{i,j=1}^{d_1}\beta_{ij}\op{a_ib_i}{a_jb_j}$, and it can be extended to a map on the full bipartite space $H_1^{\otimes 2}$ as follows.  Let $\mc{N}$ be the group of $d_1\times d_1$ unitary matrices that are diagonal in some \textit{a priori} fixed orthonormal basis $\{\ket{i}\}_{i=1}^{d_1}$ \cite{Chitambar-2016b}.  For any maximally correlated basis $\wt{\msf{B}}_1=\{\ket{a_{1}b_{1}}\}_{i=1}^{d_1}$, there exists unitary operators $U$ and $V$ such that $U\ket{i}=\ket{a_i}$ and $V\ket{i}=\ket{b_i}$.  Then with respect to MC basis $\wt{\msf{B}}_1$, we define the bipartite group twirling map
\begin{equation}
\label{Eq:tau-twirl}
\tau(\rho^{AB})=\int_{g\in\mc{N}} dg (U_g\otimes V^*_g)\rho^{AB}(U_g\otimes V^*_g)^\dagger,
\end{equation}
where $U_g=UgU^\dagger$ and $V_g=VgV^\dagger$ for $g\in\mc{N}$.  It is not difficult to see that $\tau$ transforms an arbitrary state $\rho^{AB}$ into block form
\begin{equation}
\label{Eq:tau-invariant}
\Omega=\sum_{i\not=j=1}^{d_1}\alpha_{ij}\op{a_ib_j}{a_ib_j}+\sum_{i,j=1}^{d_1}\beta_{ij}\op{a_ib_i}{a_jb_j}.
\end{equation} 
Note the second term in $\Omega$ is an MC state in the basis $\wt{\msf{B}}_1$.  Finally, let $\mc{F}:\mc{D}(\wt{\msf{B}}^{\perp}_{1})\to\mc{D}(\wt{\msf{B}}^\perp_{2})$ be any CPTP map that is dephasing covariant under $\left(\{\ket{a_ib_j}\}_{i\not=j=1}^{d_1},\;\{\ket{a_i'b_j'}\}_{i\not=j=1}^{d_2}\right)$.  Then putting $\mc{E}_{MC}$, $\mc{F}$, and $\tau$ together, we define
\begin{equation}
\label{Eq:DIO-MC}
\wt{\mc{E}}=(\mc{F}\oplus\mc{E}_{MC})\circ\tau,
\end{equation}
where $\mc{F}\oplus\mc{E}_{MC}$ indicates a map that preserves the block form of $\tau(\rho^{AB})$.  We will refer to the map $\wt{\mc{E}}$ as an MC extension of the original DIO map, and it is dephasing covariant under the complete product bases $\left(\{\ket{a_ib_j}\}_{i,j=1}^{d_1},\;\{\ket{a_i'b_j'}\}_{i,j=1}^{d_2}\right)$. All maps $\wt{\mc{E}}$ constructed in this way constitute our operational class.
\begin{definition}
A CPTP map $\mc{E}:\mc{D}(H_1^{\otimes 2})\to\mc{D}(H_2^{\otimes 2})$ is called maximally correlated dephasing covariant (MCDC) if is an MC extension of any DIO map; i.e. it has the form of Eq. \eqref{Eq:DIO-MC}. 
\end{definition}
Unlike the class DIO in coherence theory, MCDC does not depend on the choice of some particular basis.  This is because the dephasing bases $\wt{\msf{B}}_1$ and $\wt{\msf{B}}_2$ in Eq. \eqref{Eq:MC-Dephasing} can be any pair of maximally correlated bases.  By applying local unitaries, an arbitrary MC state can be transformed into the form $\wt{\rho}=\sum_{i,j}\beta_{ij}\op{ii}{jj}$ for some fixed basis $\{\ket{i}\}_{i=1}^d$.  From Eq. \eqref{Eq:extension} and the invariance of all MC states under $\tau$, it is easy to see that if $\wt{\mc{E}}$ is an MC extension of some DIO map $\mc{E}$, then $\mc{E}(\rho)\;\Leftrightarrow\;\wt{\mc{E}}(\wt{\rho})$ whenever $\rho\;\Leftrightarrow\;\wt{\rho}$.  Theorem \ref{Thm:DIO-reversible} then immediately yields the following.
\begin{corollary}
\label{Cor:main}
Every MC state $\rho_{MC}$ belongs to the reversibility class $\mc{R}_{MCDC}(\Phi^+)$.
\end{corollary}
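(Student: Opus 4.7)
The plan is to lift the distillation and formation protocols from Theorem \ref{Thm:DIO-reversible} into the MC entanglement setting through the MC extension construction in Eq. \eqref{Eq:DIO-MC}, invoking the operational correspondence that the authors state just before the corollary.

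To carry this out, I would first associate to the given MC state $\rho_{MC}=\sum_{i,j}\beta_{ij}\op{a_ib_i}{a_jb_j}$ its coherence counterpart $\rho=\sum_{i,j}\beta_{ij}\op{i}{j}$ via Eq. \eqref{Eq:coherence-MC}. Theorem \ref{Thm:DIO-reversible} then supplies, for each $\epsilon>0$, DIO distillation maps $\mc{E}_n$ with $\mc{E}_n(\rho^{\otimes n})\apxep(\varphi^+)^{\otimes m_n}$ at rate $m_n/n\to C_r(\rho)-\epsilon$ and DIO formation maps $\mc{F}_n$ with $\mc{F}_n((\varphi^+)^{\otimes n})\apxep\rho^{\otimes m_n'}$ at rate $n/m_n'\to C_r(\rho)+\epsilon$. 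Taking MC extensions $\wt{\mc{E}}_n,\wt{\mc{F}}_n\in\mathrm{MCDC}$, the key observation (already noted in the text) is that $\tau$ acts trivially on MC-supported states and $\mc{E}_{MC}$ is defined to mirror $\mc{E}$, so $\wt{\mc{E}}_n$ sends the MC encoding of $\rho^{\otimes n}$ to the MC encoding of $\mc{E}_n(\rho^{\otimes n})$. Because the isometry $\ket{i}\leftrightarrow\ket{a_ib_i}$ preserves trace norms of operators supported on the MC subspace, the $\apxep$ approximations transport without loss.

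The one piece of genuine bookkeeping is identifying $\wt{\rho^{\otimes n}}$, which sits on $H^{\otimes n}\otimes H^{\otimes n}$ with MC basis $\{\ket{i_1\cdots i_n}\otimes\ket{i_1\cdots i_n}\}$, with $\rho_{MC}^{\otimes n}$, which sits on $(H\otimes H)^{\otimes n}$; the two differ by a fixed local permutation of Alice's and Bob's tensor factors, a basis-permuting unitary that commutes with the product dephasing and so belongs to MCDC. Absorbing this permutation into the protocols yields
\[
R_{MCDC}(\rho_{MC}\to\Phi^+)\geq C_r(\rho),\qquad R_{MCDC}(\Phi^+\to\rho_{MC})\geq 1/C_r(\rho),
\]
so the product on the left-hand side of Eq. \eqref{Eq:RateReciprocal} is at least $1$. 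To obtain equality (and hence reversibility), I would invoke any continuous, additive entanglement monotone that is nonincreasing under MCDC and normalized to $1$ on $\Phi^+$ (for instance, the relative entropy of entanglement, which coincides with $C_r(\rho)$ on MC states), which forces the product to be at most $1$. The only non-routine step is verifying that the subsystem-swap unitary relating $\wt{\rho^{\otimes n}}$ to $\rho_{MC}^{\otimes n}$ genuinely lies in MCDC; beyond that, the corollary is an immediate transcription of Theorem \ref{Thm:DIO-reversible} through the MC extension.
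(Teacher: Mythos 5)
Your proposal is correct and follows essentially the same route as the paper, which derives the corollary directly from Theorem \ref{Thm:DIO-reversible} via the correspondence $\mc{E}(\rho)\Leftrightarrow\wt{\mc{E}}(\wt{\rho})$ for MC extensions of DIO maps. The extra bookkeeping you supply (the tensor-factor reordering between $\wt{\rho^{\otimes n}}$ and $\rho_{MC}^{\otimes n}$, and the converse bound via a separability-preserving monotone such as the relative entropy of entanglement) is detail the paper compresses into ``immediately yields,'' and is consistent with its argument.
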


How strong is the class MCDC?  One way to answer this question is in terms of monotones.  A function $f$ is an $\mc{O}$-monotone for operational class $\mc{O}$ if $f(\rho)\geq f(\mc{E}(\rho))$ for all $\rho$ and all $\mc{E}\in\mc{O}$.    A weaker form of monotonicity is that $f$ remains non-increasing under pure-state transformations; i.e. when both $\rho$ and $\mc{E}(\rho)$ are pure.  The strength of an operational class can then be assessed in terms of which monotones it violates.  For example, for every $\alpha\in[0,\infty]$ the R\'{e}nyi $\alpha$-entropy of entanglement is an LOCC monotone under pure-state transformations \cite{Vidal-2000a}.  Recall that for a bipartite pure state $\ket{\wt{\psi}}$ with nonzero squared Schmidt coefficients $\{\lambda_i\}_{i=1}^d$, its R\'{e}nyi $\alpha$-entropies of entanglement are given by
\begin{equation}
E_\alpha(\wt{\psi})=\frac{1}{1-\alpha}\log\left(\sum_{i=1}^d\lambda_i^\alpha\right)\quad \alpha\in(0,1)\cup(1,\alpha),
\end{equation}     
with the limiting cases $E_0(\wt{\psi})=\log d$, $E_1(\wt{\psi})=H(\{\lambda_i\})$, and $E_\infty(\wt{\psi})=\max_i(-\log\lambda_i)$.
\begin{lemma}
\label{Lem:MCDC-Renyi}
For $\alpha\in[0,\infty]$, the R\'{e}nyi $\alpha$-entropy is an MCDC monotone under pure-state transformations.
\end{lemma}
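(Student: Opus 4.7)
The plan is to reduce every MCDC pure-state transformation to a DIO pure-state transformation via the coherence-MC correspondence of Eq.~\eqref{Eq:coherence-MC}, and then to invoke the DIO pure-state monotonicity of $S_\alpha(\Delta(\cdot))$ already established in Ref.~\cite{Chitambar-2016b}, which states exactly that $S_\alpha(\Delta(\psi))$ is non-increasing under DIO pure-state transformations for all $\alpha\in[0,\infty]$.

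First, I would decompose an arbitrary pure input as $\ket{\psi}=\sum_{i,j=1}^{d_1}\gamma_{ij}\ket{a_ib_j}$ in the product basis associated with $\wt{\msf{B}}_1$, isolating its MC component $\ket{\psi_{MC}}=\sum_i\gamma_{ii}\ket{a_ib_i}$. Applying Eq.~\eqref{Eq:tau-invariant} expresses $\tau(\proj{\psi})$ as the sum of the subnormalized pure MC state $\proj{\psi_{MC}}$ (total trace $1-p$) and the classical mixture $\sum_{i\neq j}|\gamma_{ij}|^2\proj{a_ib_j}$ supported in $\wt{\msf{B}}_1^\perp$ (total trace $p:=\sum_{i\neq j}|\gamma_{ij}|^2$). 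Since $\mc{F}$ and $\mc{E}_{MC}$ are CPTP and send their inputs into the mutually orthogonal subspaces $\wt{\msf{B}}_2^\perp$ and $\wt{\msf{B}}_2$, the images of the two summands remain orthogonal and retain traces $p$ and $1-p$. Requiring $\wt{\mc{E}}(\proj{\psi})=\proj{\phi}$ to be pure therefore pins $p\in\{0,1\}$ and splits the argument into two cases.

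In Case A ($p=0$, so $\ket{\psi}$ is MC in $\wt{\msf{B}}_1$), the correspondence of Eqs.~\eqref{Eq:coherence-MC}--\eqref{Eq:extension} identifies $\mc{E}_{MC}(\proj{\psi_{MC}})=\proj{\phi}$ with a DIO pure-state transformation $\mc{E}(\proj{\psi'})=\proj{\phi'}$ between the coherent states $\ket{\psi'}=\sum_i\gamma_{ii}\ket{i}$ and $\ket{\phi'}$; a short Schmidt-coefficient check gives $E_\alpha(\psi)=S_\alpha(\Delta(\proj{\psi'}))$ and $E_\alpha(\phi)=S_\alpha(\Delta(\proj{\phi'}))$, and the cited DIO monotonicity then immediately yields $E_\alpha(\psi)\geq E_\alpha(\phi)$. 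In Case B ($p=1$), $\tau(\proj{\psi})$ is already diagonal in $\{\ket{a_ib_j}\}_{i\neq j}$, and dephasing covariance of $\mc{F}$ forces $\mc{F}(\tau(\proj{\psi}))$ to be diagonal in $\{\ket{a'_ib'_j}\}_{i\neq j}$; since a pure density matrix that is diagonal in an orthonormal basis must equal a single basis projector, $\ket{\phi}=\ket{a'_ib'_j}$ for some $i\neq j$ is a product state and $E_\alpha(\phi)=0\leq E_\alpha(\psi)$ is automatic.

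The step I expect to be the main obstacle is the orthogonal-block argument itself: I must carefully combine CPTP with the orthogonal ranges of $\mc{F}$ and $\mc{E}_{MC}$ to guarantee that a pure output from a pure input really does pin all the weight onto one of the two blocks, ruling out any mixed split between the MC and non-MC sectors. Once that dichotomy is secured, Case A is a direct translation to the known DIO monotonicity and Case B is essentially trivial, so the remaining work reduces to bookkeeping around the coherence-MC correspondence.
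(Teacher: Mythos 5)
Your proof is correct and follows essentially the same route as the paper's: both reduce a pure-state MCDC transformation to a DIO pure-state transformation via the coherence--MC correspondence and then invoke the DIO monotonicity of $S_\alpha(\Delta(\cdot))$ from Ref.~\cite{Chitambar-2016b}. Your explicit $p\in\{0,1\}$ dichotomy is simply a more careful spelling-out of the paper's observation that the only entangled pure state of the block form $\Omega$ is maximally correlated (the non-MC block yielding only product basis states, for which monotonicity is trivial).
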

\begin{proof}
Suppose that $\wt{\mc{E}}(\op{\wt{\psi}}{\wt{\psi}})=\op{\wt{\phi}}{\wt{\phi}}$ for an MCDC map $\wt{\mc{E}}=(\mc{F}\oplus\mc{E}_{MC})\circ\tau$.  By the structure of MCDC maps, $\op{\wt{\phi}}{\wt{\phi}}$ must have the form of $\Omega$ in Eq. \eqref{Eq:tau-invariant}.  It is easy to see that the only entangled pure state in this family is the MC state $\sum_{i,j=1}^{d_1}\beta_i\beta_j^*\op{a_ib_i}{b_jb_j}$.  Since $\tau$ acts invariantly on MC states, it follows that if $\ket{\wt{\phi}}$ is entangled and $\ket{\wt{\psi}}\to\ket{\wt{\phi}}$ by MCDC, then there must exist a map $\mc{E}_{MC}:\mc{D}(\wt{\msf{B}}_1)\to\mc{D}(\wt{\msf{B}}_2)$ such that $\mc{E}_{MC}(\op{\wt{\psi}}{\wt{\psi}})=\op{\wt{\phi}}{\wt{\phi}}$.  But up to a local change in basis, such maps are in a one-to-one correspondance with DIO maps $\mc{E}:\mc{D}(H_1)\to\mc{D}(H_2)$.  However, as noted above, all R\'{e}nyi $\alpha$-entropies $S_\alpha(\{\Delta(\psi)\})$ are monotones under DIO \cite{Chitambar-2016b}.  Since $S_\alpha(\Delta(\psi))=E_\alpha(\ket{\wt{\psi}})$, the lemma follows.
\end{proof}

It is interesting to note that Lemma \ref{Lem:MCDC-Renyi} does not hold for PPT operations \cite{Rains-1999a}, which is a close cousin to LOCC.  In particular, the so-called Schmidt rank, i.e. $2^{E_0(\wt{\psi})}$, is \textit{not} a monotone under PPT operations \cite{Ishizaka-2005a, Matthews-2008a}.  Based on this, one might speculate that MCDC operations are generally weaker than PPT operations.  However, this is not the case as MCDC can increase the negativity of a state \cite{Vidal-2002b}, a result that follows from recent work in coherence theory.  The partial transpose of the MC state $\wt{\rho}=\sum_{i,j}\beta_{ij}\op{ii}{jj}$ can easily be computed as
\begin{equation}
\wt{\rho}^{\Gamma_B}=\sum_{i=1}^d\beta_{ii}\op{ii}{ii}+\sum_{i<j=1}^d\beta_{ij}[\Psi^+_{ij}-\Psi^-_{ij}],
\end{equation}
where $\ket{\Psi^{\pm}_{ij}}=\sqrt{1/2}(\ket{ij}-\ket{ji})$.  From this, its negativity is immediately seen,
\begin{equation}
E_{N}(\wt{\rho})=\frac{1}{2}\left(\norm{\wt{\rho}^{\Gamma_B}}-1\right)=\sum_{i<j=1}^d|\beta_{ij}|.
\end{equation}
However, the RHS is precisely the $\ell_1$ norm of coherence of the state $\rho=\sum_{i,j=1}^d\beta_{ij}\op{i}{j}$ \cite{Baumgratz-2014a}.  It has recently been shown that the $\ell_1$ norm is not a DIO monotone \cite{Bu-2016a}, from which it follows that the negativity is likewise not an MCDC monotone.

\medskip

\noindent\textbf{\textit{Multipartite MC reversibility.}}  We close the paper by observing that Corollary \ref{Cor:main} can easily be extended to $N$-party MC states.  Such states have the same form as a bipartite MC state except with the maximally correlated basis being $N$-partite, i.e. $\wt{\msf{B}}=\{\ket{a_ib_ic_i\cdots}\}_{i=1}^d$.  An $N$-partite MCDC operation is defined as before except the map $\tau^{(N)}$ is the composition of bipartite group twirlings for every pair of parties.  For an arbitrary $\rho^{(N)}$, it is not difficult to see that $\tau^{(N)}(\rho^{(N)})=\sigma^{(N)}+\rho^{(N)}_{MC}$, where $\rho^{(N)}_{MC}$ is an MC state in the maximally correlated basis $\wt{\msf{B}}$, and $\sigma^{(N)}$ is some state diagonal in the basis $\{\ket{a_{i_1}b_{i_2}c_{i_3}\cdots}:\exists_{j,k\in[N]}\;\text{such that}\;i_j\not=i_k\}$.
Using the same reasoning as before, one obtains the following.
\begin{corollary}
Every $N$-party MC state $\rho^{(N)}_{MC}$ belongs to the reversibility class $\mc{R}_{MCDC}(\Phi^+_N)$, where $\ket{\Phi^+_N}=\sqrt{1/2}(\ket{000\cdots}+\ket{111\dots})$.
\end{corollary}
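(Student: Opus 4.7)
The plan is to mirror the bipartite strategy of Corollary \ref{Cor:main} but with the $N$-partite twirl $\tau^{(N)}$ in place of $\tau$. First I would promote the state correspondence of Eq. \eqref{Eq:coherence-MC} to $N$ parties, associating to each density matrix $\rho=\sum_{ij}c_{ij}\op{i}{j}$ on $H_1$ the $N$-partite MC state $\rho^{(N)}=\sum_{ij}c_{ij}\op{i\cdots i}{j\cdots j}$ on $H_1^{\otimes N}$, so that $\ket{\varphi^+}$ is paired with $\ket{\Phi_N^+}$.

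Next I would build an $N$-partite MC extension of any DIO map $\mc{E}:\mc{D}(H_1)\to\mc{D}(H_2)$ exactly as in Eqs. \eqref{Eq:extension}--\eqref{Eq:DIO-MC}: on the span of the $N$-partite maximally correlated basis $\wt{\msf{B}}$ set
\begin{equation}
\mc{E}^{(N)}_{MC}\bigl(\op{a_ib_ic_i\cdots}{a_jb_jc_j\cdots}\bigr)=\sum_{k,l}\bra{k}\mc{E}(\op{i}{j})\ket{l}\,\op{a'_kb'_kc'_k\cdots}{a'_lb'_lc'_l\cdots},
\end{equation}
complete this with any CPTP map $\mc{F}$ dephasing-covariant on the complementary ``unmatched-index'' basis, and define $\wt{\mc{E}}^{(N)}=(\mc{F}\oplus\mc{E}^{(N)}_{MC})\circ\tau^{(N)}$. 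Because each bipartite factor of $\tau^{(N)}$ already fixes every $N$-partite MC state, the correspondence $\rho\leftrightarrow\rho^{(N)}$ intertwines the DIO action of $\mc{E}$ with the MCDC action of $\wt{\mc{E}}^{(N)}$: any DIO protocol achieving $\mc{E}(\rho^{\otimes n})\apxep(\varphi^+)^{\otimes m}$ (or its reverse) lifts to an MCDC protocol with identical rate and error on the corresponding MC states.

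Combining this lift with Theorem \ref{Thm:DIO-reversible} yields $R_{MCDC}(\rho^{(N)}_{MC}\to\Phi_N^+)=C_r(\rho)=1/R_{MCDC}(\Phi_N^+\to\rho^{(N)}_{MC})$, which gives the corollary. The main obstacle is verifying the claimed block structure $\tau^{(N)}(\rho^{(N)})=\sigma^{(N)}+\rho^{(N)}_{MC}$, which I would check as follows: a rank-one term $\op{a_{i_1}b_{i_2}c_{i_3}\cdots}{a_{j_1}b_{j_2}c_{j_3}\cdots}$ survives the pairwise twirl on parties $(p,q)$ only when $i_p=i_q$ and $j_p=j_q$; imposing this for every pair either forces all $i_p$ to coincide and all $j_p$ to coincide (giving the MC off-diagonal block) or leaves a diagonal term supported on the unmatched-index basis. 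Once this structural observation is in hand, the bookkeeping of the bipartite argument carries through verbatim, and the choice of completion $\mc{F}$ is immaterial since it acts only on the $\sigma^{(N)}$ block whose contribution vanishes on MC inputs.
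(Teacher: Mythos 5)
Your proposal is correct and follows essentially the same route as the paper: define the $N$-partite MC extension of a DIO map, verify that the pairwise twirl $\tau^{(N)}$ reduces every state to a mismatched-index diagonal block plus an MC block (your dichotomy argument is the right one, noting that a non-diagonal surviving term forces all indices to coincide pairwise), and then lift Theorem~\ref{Thm:DIO-reversible} through the correspondence $\rho\leftrightarrow\rho^{(N)}_{MC}$. The paper leaves these steps implicit (``using the same reasoning as before''), so your write-up is in fact a more explicit version of the intended argument.
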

\noindent We remark that the LOCC convertibility between pure states in the class $\mc{R}_{\text{LOCC}}(\Phi^+_N)$ has previously been studied \cite{Xin-2007a}.

\medskip 

\noindent\textbf{\textit{Acknowledgments.}}  During the preparation of this manuscript, the asymptotic coherence distillation rate under DIO has also been independently derived by Regula \textit{et. al} \cite{Regula-2017a}.  EC graciously thanks Gilad Gour, Qi Zhao, Yunchao Liu, Xiao Yuan, Xiongfeng Ma, Kun Fang, and Xin Wang for enlightening discussions on coherence and DIO protocols.  This work is supported by the National Science Foundation (NSF) Early CAREER Award No. 1352326.

\bibliography{CoherenceDIOBib}

\section{Appendix}

\subsection{Preliminaries: Types, Typicality, and Channel Codes}

The type of a sequence $x^n\in\mc{X}^n$ is the distribution $p_{x^n}$ over $\mc{X}$ defined by
\[p_{x^n}(a):=\frac{1}{n}N(a|x^n)\qquad \forall a\in\mc{X},\]
where $N(a|x^n)$ is the number of occurrences of the symbol $a\in\mc{X}$ in the sequence $x^n$ \cite{Csiszar-2011a}.  For a given distribution $p$, the collection of all sequences having type $p$ is called the type class of $p$ and is denoted by $T_p^n$.  A distribution $p$ is said to be an empirical type (for some $n\in\mbb{N}$) if $T_p^n$ is nonempty.  Note that the number of empirical types is no greater than $(n+1)^{|\mc{X}|}$.

A sequence $x^n$ is said to be $\delta$-typical (or just typical) w.r.t. distribution $p$ if 
\[\left|\frac{1}{n}N(a|x^n)-p(a)\right|<\delta\qquad\forall a\in\mc{X}.\]
The set of all $\delta$-typical sequences will be denoted by $T^n_{[p]_\delta}$.  
Note that the set $T^n_{[p]_\delta}$ is the union of empirical type classes, and hence we will say that a distribution $q$ is typical w.r.t. $p$ if it is an empirical type with $T^n_q\subset T^n_{[p]_\delta}$.  A well-know property of the typical set is that
\begin{equation}
\label{Eq:typicalProb}
p^n\left(T^n_{[p]_\delta}\right):=\pr[x^n\in T^n_{[p]_\delta}]\geq 1-\epsilon.
\end{equation}

A CQ channel $\mc{W}:\mc{X}\to \mc{H}$ is a mapping of the form $x\mapsto\rho_x$.  For a distribution $p(x)$ over $\mc{X}$, we denote its Shannon entropy by $H(p)$ and the input/output mutual information induced by $\mc{W}$ as $I(\mc{W}; p)=S(\sum_xp(x)\rho_x)-\sum_x p(x) S(\rho_x)$.  An $(n,\epsilon)$ code of size $|\mrm{C}|$ for $\mc{W}$ is a sequence of codewords $\mrm{C}=(u_c)_{c=1}^{|\mrm{C}|}$ with $u_c\in \mc{X}^n$ and a collection of positive operators $\{\Pi_c\}_{c=1}^{|\mrm{C}|}$ acting on $\mc{H}^{\otimes n}$ and satisfying $\sum_{c=1}^{|\mrm{C}|}\Pi_c\leq\mbb{I}$ such that
\begin{equation}
\label{Eq:epsilon-good}
\max_{c\in\{1,\cdots,|\mrm{C}|\}}\left(1- \tr[\rho_{u_c}\Pi_{c}]\right)<\epsilon,
\end{equation}
where $\rho_{u_c}:=\rho_{x_1}\otimes\cdots\otimes\rho_{x_n}$ for $u_c=x^n$.

The following well-known channel coding lemmas will be used in the formation protocol of Theorem 2.
\begin{lemma}[Winter \cite{Winter-1999a}]
\label{Lem:WinterCode}
For any $\eta,\tau,\epsilon>0$ and CQ channel $\mc{W}$, if $\mc{A}\subset\mc{X}^n$ satisfies $p^n(\mc{A})\geq\eta$, then (for sufficiently large $n$) there exists an $(n,\epsilon)$ code $\mrm{C}\subset\mc{A}$ for $\mc{W}$ such that $\frac{1}{n}\log |\mrm{C}|\geq I(\mc{W};p)-\tau$.  Moreover, all the codewords $u_c\in\mrm{C}$ can be chosen as distinct sequences of the same type.
\end{lemma}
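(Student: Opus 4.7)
The plan is to adapt the standard random-coding / packing-lemma proof of the Holevo-Schumacher-Westmoreland theorem, with two modifications: confine codewords to $\mc{A}$, and force them all to share a single empirical type. I would begin by reducing to one type class. By Eq.~\eqref{Eq:typicalProb}, for sufficiently large $n$ one has $p^n(\mc{A}\cap T^n_{[p]_\delta})\geq \eta/2$. Since the typical set is a union of at most $(n+1)^{|\mc{X}|}$ empirical type classes $T^n_q$ (each with $q$ within $\delta$ of $p$ in total variation), pigeonholing produces a typical $q$ with $p^n(\mc{A}\cap T^n_q)\geq \eta/(2(n+1)^{|\mc{X}|})$. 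Set $\mc{B}=\mc{A}\cap T^n_q$; since every $u\in T^n_q$ has identical $p^n$-probability, $|\mc{B}|\geq 2^{nH(q)}\cdot \eta/(2(n+1)^{|\mc{X}|})$. Continuity of $H$ together with $H(p)\geq I(\mc{W};p)$ (which follows from $S(\bar\rho_p)\leq H(p)+\sum_x p(x)S(\rho_x)$) imply $H(q)\geq I(\mc{W};p)-o(1)$, so $\mc{B}$ is exponentially larger than the target codebook size.

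Next I would run the standard random code. Draw codewords $U_1,\dots,U_M$ i.i.d.\ uniformly from $\mc{B}$ with $M=\lceil 2^{n(I(\mc{W};p)-\tau)}\rceil$. Let $\bar\rho_q=\sum_x q(x)\rho_x$, let $\Pi$ be the typical projector for $\bar\rho_q^{\otimes n}$, and for each $u\in T^n_q$ let $\Pi_u$ be the conditional typical projector. These satisfy the standard relations $\tr(\Pi\rho_u)\geq 1-\epsilon'$, $\tr(\Pi_u\rho_u)\geq 1-\epsilon'$, the exponential bound $\Pi\rho_u\Pi\leq 2^{-n(S(\bar\rho_q)-\gamma)}\Pi$, and $\tr(\Pi_u)\leq 2^{n(\sum_x q(x)S(\rho_x)+\gamma)}$. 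Define $\Lambda_c=\Pi\,\Pi_{U_c}\Pi$ and form the pretty-good-measurement POVM $\Pi_c=(\sum_{c'}\Lambda_{c'})^{-1/2}\Lambda_c(\sum_{c'}\Lambda_{c'})^{-1/2}$. The Hayashi-Nagaoka operator inequality together with the gentle measurement lemma controls the first error term $\tr[(\mbb{I}-\Lambda_c)\rho_{U_c}]$ by $O(\sqrt{\epsilon'})$; the off-diagonal term is at most $M\cdot 2^{-n(I(\mc{W};q)-2\gamma)}$ by averaging $\tr(\Lambda_{U_{c'}}\rho_{U_c})$ over an independent $U_{c'}$ and invoking the two exponential inequalities above. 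Since $q$ is within $|\mc{X}|\delta$ of $p$ in $\ell_1$ distance, continuity of $I(\mc{W};\cdot)$ yields $I(\mc{W};q)\geq I(\mc{W};p)-\tau/2$ for small $\delta$, so the expected average error falls below $\epsilon/2$.

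Finally, a Markov-style expurgation selects a deterministic realization in which the worst half of codewords can be discarded, yielding an $(n,\epsilon)$ code of size $\geq M/2$ inside $\mc{B}\subset \mc{A}\cap T^n_q$; the factor of $2$ is absorbed into $\tau$, and duplicates among the sampled codewords form an $o(1)$ fraction (since $M/|\mc{B}|\to 0$) and can be removed by a further expurgation, delivering the distinct same-type codebook required by the lemma. The main obstacle is the joint calibration of $\delta,\gamma,\epsilon'$ so that the gentle-measurement errors, the input-continuity gap $I(\mc{W};p)-I(\mc{W};q)$, and the polynomial type-count penalty all simultaneously fit under $\tau/2$; once these parameters are chosen small enough, the remaining steps are standard bookkeeping.
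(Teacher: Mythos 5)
Your proposal is essentially correct, but it takes a much heavier route than the paper does. The paper's proof does \emph{not} reprove the coding theorem: it simply cites Winter's Theorem 10, which already supplies an $(n,\epsilon)$ code $\mrm{C}\subset\mc{A}$ of rate $I(\mc{W};p)-\tau$ for any subset $\mc{A}$ with $p^n(\mc{A})\geq\eta$, and then obtains the constant-composition property \emph{a posteriori} by a one-line pigeonhole: since there are at most $(n+1)^{|\mc{X}|}$ types and the codewords are distinct (for $\epsilon\leq 1/2$), some type class $T^n_q$ contains at least $|\mrm{C}|/(n+1)^{|\mc{X}|}$ codewords, and this subcode is still an $(n,\epsilon)$ code at asymptotically the same rate. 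You instead pigeonhole \emph{first}, restricting to $\mc{B}=\mc{A}\cap T^n_q$, and then rerun the entire HSW random-coding/pretty-good-measurement analysis inside $\mc{B}$. Your route buys self-containedness and yields a genuinely constant-composition construction from the start, at the cost of redoing the achievability proof and of extra calibration (continuity of $I(\mc{W};\cdot)$ in the input distribution, the polynomial loss $|T^n_q|/|\mc{B}|\leq 2(n+1)^{|\mc{X}|}/\eta$, expurgation for max-error and distinctness); the paper's route is two lines given the cited theorem. One technical point to fix if you carry your version through: the operator inequality you list, $\Pi\rho_u\Pi\leq 2^{-n(S(\bar\rho_q)-\gamma)}\Pi$ for an \emph{individual} codeword state $\rho_u$, is not valid in general (e.g., for pure $\rho_x$ the left-hand side can have norm close to $1$). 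The standard repair is to bound the off-diagonal term by averaging over the signal codeword $U_c$ rather than the decoder index, using $\mbb{E}_{U_c}[\rho_{U_c}]\leq\mathrm{poly}(n)\,\bar\rho_q^{\otimes n}$ together with $\Pi\bar\rho_q^{\otimes n}\Pi\leq 2^{-n(S(\bar\rho_q)-\gamma)}\Pi$ and $\tr\Pi_{u}\leq 2^{n(\sum_xq(x)S(\rho_x)+\gamma)}$; with that substitution your argument goes through.
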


\begin{proof}
The first part is Theorem 10 in Ref. \cite{Winter-1999a}.  To show that the codewords can be taken as the same type, suppose that a code $(u_c)_{c=1}^{|\mrm{C}|}$ has been formed but with the $u_c$ not necessarily the same type.  Without loss of generality, suppose that $\epsilon\leq 1/2$ in which case Eq. \eqref{Eq:epsilon-good} implies that all the $u_c$ are distinct.  Since the number of sequence types is no greater than $(n+1)^{|\mc{X}|}$, there must exist some type $q$ such that 
\[\frac{1}{n}\log |T^n_q\cap \mrm{C}|\geq\frac{1}{n}\log \frac{|\mrm{C}|}{(n+1)^{|\mc{X}|}}\geq I(\mc{W};p)-2\tau\]
for $n$ sufficiently large.  Thus the subcode $T^n_q\cap \mrm{C}$ attains the same asymptotic rate $I(\mc{W};p)$ as $\mrm{C}$ and has codewords of the same type.

\end{proof}

\begin{lemma}[Csisz\'ar and K\"orner Chpt. 13 \cite{Csiszar-2011a}, Devetak and Winter \cite{Devetak-2004b}]
\label{Lemma:CQ-cover}
Let $\mc{W}$ be a CQ channel and $p$ a distribution over $\mc{X}$.  For any $\tau,\epsilon>0$ and $n$ sufficiently large, there exists a set $\mc{G}=\{u_{lc}\}\subset\mc{X}^n$ with $p^n(\mc{G})\geq 1-\epsilon$ such that 
\begin{enumerate}
\item $l$ ranges over $\{1,\cdots, L\}$ with 
\[\frac{1}{n}\log L\leq H(p)- I(\mc{W};p)+\tau,\]
\item For each fixed $l$, the sequence $\mrm{C}_{l}=(u_{lc})_{c=1}^{C}$ forms an $(n,\epsilon)$ code for $\mc{W}$ with each $u_{lc}$ being the same type and
\[\frac{1}{n}\log C\geq I(\mc{W}:p)-\tau/3.\]
\end{enumerate}
\end{lemma}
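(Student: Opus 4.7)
The plan is to obtain the covering by iteratively applying Winter's channel coding lemma (Lemma \ref{Lem:WinterCode}) inside the typical set, extracting disjoint codes of the same size $C$ until most of the probability mass has been exhausted. The number $L$ of such codes will then be controlled by a simple counting bound, since each codeword lives in the typical set whose total size is essentially $2^{nH(p)}$.

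First, I would set $\mc{A}_0 = T^n_{[p]_\delta}$ for a small $\delta>0$ to be chosen later, and fix the stopping threshold $\eta = \epsilon/2$. By Eq. \eqref{Eq:typicalProb}, $p^n(\mc{A}_0) \geq 1 - \epsilon/2$ for $n$ large enough. I would then apply Lemma \ref{Lem:WinterCode} to $\mc{A}_0$ with rate slack $\tau/3$ and error $\epsilon$: this yields an $(n,\epsilon)$ code $\mrm{C}_1\subset\mc{A}_0$ whose codewords share a single type and whose size is at least $2^{n(I(\mc{W};p)-\tau/3)}$. I would truncate this to exactly $C = \lceil 2^{n(I(\mc{W};p)-\tau/3)}\rceil$ codewords (a subcode inherits the $(n,\epsilon)$ property with the restricted decoders), and set $\mc{A}_1 = \mc{A}_0\setminus \mrm{C}_1$.

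Second, I would iterate: as long as $p^n(\mc{A}_k) \geq \eta$, reapply Lemma \ref{Lem:WinterCode} to $\mc{A}_k$ to extract a fresh $(n,\epsilon)$ code $\mrm{C}_{k+1}\subset\mc{A}_k$ of size $C$ and a single type, setting $\mc{A}_{k+1} = \mc{A}_k\setminus \mrm{C}_{k+1}$. Let $L$ be the first index with $p^n(\mc{A}_L)<\eta$. Then $\mc{G}:=\bigsqcup_{l=1}^L \mrm{C}_l$ satisfies
\[
p^n(\mc{G}) \;=\; p^n(\mc{A}_0) - p^n(\mc{A}_L) \;\geq\; 1-\epsilon,
\]
and relabeling the codewords of $\mrm{C}_l$ as $u_{l,1},\ldots,u_{l,C}$ gives property (2) by construction.

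Third, I would bound $L$ by counting: every $u_{lc}\in\mc{A}_0$ is $\delta$-typical, hence $p^n(u_{lc})\geq 2^{-n(H(p)+\tau(\delta))}$ where $\tau(\delta)\to 0$ as $\delta\to 0$; consequently $|\mc{A}_0|\leq 2^{n(H(p)+\tau(\delta))}$. Since the extracted codes are disjoint,
\[
LC \;\leq\; |\mc{G}| \;\leq\; |\mc{A}_0| \;\leq\; 2^{n(H(p)+\tau(\delta))},
\]
so $\tfrac{1}{n}\log L \leq H(p) - I(\mc{W};p) + \tau(\delta) + \tau/3 + O(1/n)$. Choosing $\delta$ with $\tau(\delta)<\tau/3$ and $n$ large enough to absorb the rounding gives property (1).

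The main subtlety is the bookkeeping around the iteration: one must check that at every step the residual set still has mass $\geq\eta$ so Winter's lemma applies, and that truncating to a uniform size $C$ preserves both the $(n,\epsilon)$-code property and the single-type property. The tightness of the rate bound hinges on the fact that the typicality slack $\tau(\delta) + O(1/n)$ fits inside the gap $\tau-\tau/3$ between the rate tolerance asked for in the statement and the rate tolerance used when invoking Lemma \ref{Lem:WinterCode}.
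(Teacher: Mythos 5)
Your proposal is correct and follows essentially the same route as the paper's proof: iteratively extract disjoint single-type $(n,\epsilon)$ codes from the typical set via Lemma \ref{Lem:WinterCode} until the residual mass drops below $\epsilon/2$, then bound $L$ by comparing the total size of the disjoint codes against $|T^n_{[p]_\delta}|\approx 2^{nH(p)}$. Your explicit truncation of each code to a uniform size $C$ is a minor bookkeeping refinement the paper glosses over, but the argument is the same.
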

\begin{proof}
Take $\delta$ sufficiently small and $n$ sufficiently large so that (i) $H(X)+\tau/3\geq \frac{1}{n}\log|T^n_{[X]_\delta}|$, (ii) $p^n(T^n_{[X]_\delta})\geq 1-\epsilon/2$, and (iii) for every $\mc{A}\subset T^n_{[X]_\delta}$ satisfying $p^n(\mc{A})\geq \epsilon/2$ there exists an $(n,\epsilon)$ code $\mrm{C}\subset\mc{A}$ for $\mc{W}$ with $\frac{1}{n}\log|\mrm{C}|\geq I(\mc{W};p)-\tau/3$ and having distinct code words of the same type.  This is possible by the properties of typical sequences and Lemma \ref{Lem:WinterCode}.  Let $\mrm{C}_1$ denote one such code, and set $\mc{A}_1:= T^n_{[X]_\delta}$.  Applying Lemma \ref{Lem:WinterCode} iteratively, a $k^{th}$ such code $\mrm{C}_k$ can always be constructed with $\mrm{C}_k\subset\mc{A}_k:=T^n_{[X]_\delta}\setminus\cup_{j=1}^{k-1}\mrm{C}_j$ and satisfying $\frac{1}{n}\log|\mrm{C}_k|\geq I(\mc{W};p)-\tau/3$ provided that $p^n(\mc{A}_k)>\epsilon/2$.  Let $L-1$ be the final iteration after which the latter property does not hold and the construction stops.  Then
\begin{align}
p^n\left(\cup_{j=1}^{L-1}\mrm{C}_j\right)&\geq 1-p^n\left(\mc{X}^n\setminus \cup_{j=1}^{L-1}\mrm{C}_j\right)\notag\\
&\geq p^n\left(T^n_{[X]_\delta}\right)-p^n(\mc{A}_L)\geq 1-\epsilon.
\end{align}
Since the $\mrm{C}_k$ are disjoint, we have
\begin{align}
\frac{1}{n}\log|T^n_{[X]_\delta}|&\geq \frac{1}{n}\log\sum_{j=1}^{L-1}|\mrm{C}_j|\notag\\
&\geq I(\mc{W};p)-\tau/3+\frac{1}{n}\log(L-1).
\end{align}
Combined with the typicality bound on $|T^n_{[X]_\delta}|$, the previous equation yields
\begin{equation}
H(p)-I(\mc{W};p)+\tau \geq \frac{1}{n}\log M.
\end{equation}
Clearly a set $\mc{G}=\{u_{lc}\}$ can be constructed satisfying the properties of the lemma.
\end{proof}


\subsection{Derivation of Decomposition \eqref{Eq:state-approx-full}}

For an arbitrary mixed state $\rho$, let $\ket{\psi}^{AE}=\sum_x\sqrt{p(x)}\ket{x}^A\ket{\zeta_x}^B$ be a purification, where $\ket{x}$ is the incoherent basis.  Note that $\Delta(\rho)=\sum_x p(x)\op{x}{x}$.  We apply Lemma \ref{Lemma:CQ-cover} to the CQ channel $\op{x}{x}\to\op{\zeta_x}{\zeta_x}$ with source distribution $p(x)$.  For this channel,
\[H(p)-I(\mc{W};p)=S(\Delta(\rho))-S(\rho)=C_r(\rho).\]
Then Lemma \ref{Lemma:CQ-cover} implies a labeling $x^n\to(l,c)$ for all but a very improbable set of $x^n$ such that
\begin{align}
\label{Eq:CQ-decomposition}
\ket{\psi}^{\otimes n}\apxep&\sum_{l=1}^L\ket{\tilde{l}}^{A_1}\frac{1}{\sqrt{C}}\sum_{c=1}^C\ket{c}^{A_2}\ket{\zeta_{lc}}^{B},
\end{align}
where $\ket{\tilde{l}}$ is a subnormalized vector, for each $l$ the index $c$ is $\epsilon$-recoverable from the states $\{\ket{\zeta_{lc}}\}_{c}$, and  $\frac{1}{n}\log L\to C_r(\rho)+\epsilon$.  The meaning of $\epsilon$-recoverability is that there exists a decoding POVM $\{\Pi^{(l)}_0,\Pi^{(l)}_c\}_{c=1}^C$ such that $\bra{\zeta_{lc}}\Pi^{(l)}_c\ket{\zeta_{lc}}\geq 1-\epsilon$ for all $c=1,\cdots, C$.  Consider a unitary dilation for the $l^{th}$ such decoding POVM 
\begin{equation}
W_{l}^{BB_1\to BB_1}:=\sum_{c=0}^C\sqrt{\Pi_c^{(l)}}\otimes\op{c}{0}^{B_1}+V_l,
\end{equation}
where $V_l$ is some operator suitably chosen so that $W_l$ is a unitary extension of the isometry $\sum_{c=0}^C\sqrt{\Pi_c^{(l)}}\otimes\op{c}{0}^{B_1}$.  For the state $\ket{\varphi_l}^{A_2BB_1}:=\frac{1}{\sqrt{C}}\sum_{c=1}^C\ket{c}^{A_2}\ket{\zeta_{lc}}^B\ket{0}^{B_1}$, an application of the Gentle Measurement Lemma (see below) shows that
\begin{align}
\label{Eq:Gentle-Measure-App}
\left(\mbb{I}^{A_2}\otimes W^{BB_1}_{l}\right)&\ket{\varphi_{l}}^{A_2BB_1}\notag\\
&=\frac{1}{\sqrt{C}}\sum_{c=1}^{C}\sum_{c'=0}^C\ket{c}^{A_2}\sqrt{\Pi^{(l)}_{c'}}\ket{\zeta_{lc}}^B\ket{c'}^{B_1}\notag\\
&\overset{O(\epsilon)}{\approx}\ket{\widehat{\varphi}_{l}}:=\frac{1}{\sqrt{C}}\sum_{c=1}^{C}\ket{c}^{A_2}\ket{\zeta_{lc}}^B\ket{c}^{B_1}.
\end{align}
To explicitly see this, let $X_{l}^{BB_1}=\sum_{c=0}^C \Pi^{(l)}_c\otimes\op{c}{c}^{B_1}$ and observe that $\bra{\widehat{\varphi}_{l}}\mbb{I}^{A_2}\otimes X^{BB_1}_{l}\ket{\widehat{\varphi}_{l}}>1-\epsilon$ by the assumption of $\epsilon$-recoverability.  Then the Gentle Measurement Lemma gives that $\ket{\widehat{\varphi}_{l}}\overset{O(\epsilon)}{\approx}\mbb{I}\otimes \sqrt{X_{l}}\ket{\widehat{\varphi}_{l}}$.  Therefore, from the relationship $\bra{\varphi_{l}}\mbb{I}\otimes W_{l}^\dagger\sqrt{X_{l}}\ket{\widehat{\varphi}_{l}}>1-\epsilon$, it follows that $(\mbb{I}\otimes W_{l})\ket{\varphi_{l}}\overset{O(\epsilon)}{\approx}\ket{\widehat{\varphi}_{l}}$, or $\ket{\varphi_{l}}\overset{O(\epsilon)}{\approx}W^\dagger_{l}\ket{\widehat{\varphi}_{l}}$.  Since this holds for every $l$, we can conclude that
\begin{align}
\ket{\psi}^{\otimes n}\ket{0}^{B_1}&\overset{O(\epsilon)}{\approx}\sum_{l=1}^{L}\ket{\tilde{l}}^{A_1}\frac{1}{\sqrt{C}}\sum_{c=1}^{C}\ket{c}^{A_2}\ket{\chi_{lc}}^{BB_1},
\end{align}
where $\ket{\chi_{lc}}^{BB_1}=W^\dagger_{l}\left(\ket{\zeta_{lc}}^B\ket{c}^{B_1}\right)$.  See Ref. \cite{Smolin-2005a} for a similarly structured argument in the context of assisted entanglement distillation.

\begin{lemma}[Winter's Gentle Measurement \cite{Winter-1999a}] 
\label{Lem:gentle}
For $\rho\geq 0$ and $\tr\rho\leq 1$, suppose $0\leq X\leq\mbb{I}$ and $\tr\rho X\geq 1-\epsilon$.  Then $||\rho-\sqrt{X}\rho\sqrt{X}||\leq\sqrt{8\epsilon}$.
\end{lemma}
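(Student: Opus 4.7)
The plan is to reduce the mixed-state statement to a pure-state calculation via purification and then compute the trace norm in an explicit two-dimensional subspace.

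First, I would purify $\rho$. Let $\ket{\psi}^{AR}$ be any purification, so that $\tr_R \proj{\psi} = \rho$ and $\tr\rho X = \bra{\psi} X \otimes \mbb{I}_R \ket{\psi} \geq 1-\epsilon$. Because partial trace is a CPTP map and the trace norm is monotone under CPTP maps, it suffices to establish the bound at the purification level, namely
\begin{equation*}
\bigl\| \proj{\psi} - (\sqrt{X}\otimes\mbb{I})\proj{\psi}(\sqrt{X}\otimes\mbb{I}) \bigr\|_1 \leq \sqrt{8\epsilon},
\end{equation*}
after which applying $\tr_R$ and using monotonicity will give the claimed inequality. From this point on, replace $X$ by $X\otimes\mbb{I}$ and work with a single pure state $\ket{\psi}$.

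Next, the key analytic input is the operator inequality $\sqrt{X} \geq X$ for $0\leq X\leq\mbb{I}$, which follows from $\sqrt{X} - X = \sqrt{X}(\mbb{I}-\sqrt{X})\geq 0$ since both factors are PSD and commute. This yields $a := \bra{\psi}\sqrt{X}\ket{\psi} \geq \bra{\psi}X\ket{\psi} \geq 1-\epsilon$. Set $\ket{\phi} := \sqrt{X}\ket{\psi}$, which is generally subnormalized with $\langle\phi|\phi\rangle = \bra{\psi}X\ket{\psi} \geq 1-\epsilon$, and restrict attention to the (at most) two-dimensional subspace spanned by $\ket{\psi}$ and $\ket{\phi}$. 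In the orthonormal basis $\{\ket{\psi}, \ket{\psi^\perp}\}$ of this subspace, write $\ket{\phi} = a\ket{\psi} + b\ket{\psi^\perp}$ with $a$ real and nonnegative and, without loss of generality, $b$ real. Then the difference $M := \proj{\psi} - \proj{\phi}$ is represented by the $2\times 2$ matrix
\begin{equation*}
M = \begin{pmatrix} 1-a^2 & -ab \\ -ab & -b^2 \end{pmatrix}.
\end{equation*}

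The main computation is then to bound $\|M\|_1$. Since $\det M = -b^2 \leq 0$, the two eigenvalues have opposite signs, so $\|M\|_1 = \sqrt{(\tr M)^2 - 4\det M} = \sqrt{(1-a^2-b^2)^2 + 4b^2}$. Writing $\epsilon' := 1 - \langle\phi|\phi\rangle = 1-(a^2+b^2) \in [0,\epsilon]$ and using $a\geq 1-\epsilon'$ (so $a^2 \geq (1-\epsilon')^2$), one gets $b^2 \leq \epsilon'(1-\epsilon')\leq \epsilon'$ and $(1-a^2-b^2)^2 = (\epsilon')^2$. Combining, $\|M\|_1^2 \leq (\epsilon')^2 + 4\epsilon' \leq 5\epsilon' \leq 5\epsilon \leq 8\epsilon$, which finishes the pure-state case. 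Passing back through the partial trace step gives $\|\rho - \sqrt{X}\rho\sqrt{X}\|_1 \leq \sqrt{8\epsilon}$.

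The only subtlety worth flagging is the handling of subnormalized $\rho$ with $\tr\rho<1$: in that case one still constructs a purification into a slightly larger space (appending a vacuum ancilla and completing to a normalized vector), and the same argument goes through because the inequality $\sqrt{X}\geq X$ and the partial-trace monotonicity of $\|\cdot\|_1$ do not require $\rho$ to have unit trace. The rest is bookkeeping; no step requires more than the two-dimensional eigenvalue computation above.
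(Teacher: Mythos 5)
Your argument is correct, but note that the paper does not actually prove this lemma --- it is imported by citation from Winter's 1999 paper, so there is no in-paper proof to match. Your route (purify, use contractivity of $\norm{\cdot}$ under partial trace, then an explicit $2\times 2$ eigenvalue computation in $\mrm{span}\{\ket{\psi},\sqrt{X}\ket{\psi}\}$) is genuinely different from the standard argument, which stays at the level of operators and uses the triangle inequality together with Cauchy--Schwarz for the Hilbert--Schmidt inner product, $\norm{(\mbb{I}-\sqrt{X})\rho}\leq\sqrt{\tr[(\mbb{I}-\sqrt{X})^2\rho]\,\tr\rho}\leq\sqrt{\tr[(\mbb{I}-X)\rho]}$, to get $2\sqrt{\epsilon}$ (Winter's original bookkeeping gives $\sqrt{8\epsilon}$). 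Your key inequality $\sqrt{X}\geq X$ is the same one that powers that proof, and your two-dimensional determinant computation is sound: $\det M=-b^2\leq 0$, so $\norm{M}=\sqrt{(\tr M)^2-4\det M}$, and the chain $b^2\leq\epsilon'(1-\epsilon')$, $\norm{M}^2\leq(\epsilon')^2+4\epsilon'\leq 5\epsilon'$ actually yields the slightly stronger constant $\sqrt{5\epsilon}$, which of course implies the stated $\sqrt{8\epsilon}$.

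The one imprecise step is your treatment of $\tr\rho<1$. The ``vacuum ancilla'' completion you sketch does not work as stated: if $\ket{\tilde\psi}=\ket{\psi}+\sqrt{1-t}\,\ket{a_0}\ket{\perp}$ with $\ket{\perp}$ orthogonal to the reference system, then $\tr_R\proj{\tilde\psi}=\rho+(1-t)\proj{a_0}\neq\rho$, so the reduction to the normalized case is not immediate. Two clean fixes: (i) rerun your $2\times2$ computation directly with a subnormalized purification $\ip{\psi}{\psi}=t$, replacing $1$ by $t$ throughout (one finds $\norm{M}^2\leq(\epsilon'')^2+4t\epsilon''\leq5\epsilon$ with $\epsilon''=t-\bra{\psi}X\ket{\psi}$); or (ii) rescale, applying the normalized result to $\hat\rho=\rho/t$ with $\hat\epsilon\leq(t-1+\epsilon)/t$ and multiplying back by $t$, which gives $t\sqrt{5\hat\epsilon}=\sqrt{5t(t-1+\epsilon)}\leq\sqrt{5\epsilon}$ (note $t\geq1-\epsilon$ is forced by $\tr\rho X\leq\tr\rho$). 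With either repair the proof is complete.
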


\end{document}